\DeclareMathOperator{\per}{per}
\newtheorem{theo}{Theorem}
\newtheorem{lem}{Lemma}
\newtheorem{cor}{Corollary}
\def\@email#1#2{%
 \endgroup
 \patchcmd{\titleblock@produce}
  {\frontmatter@RRAPformat}
  {\frontmatter@RRAPformat{\produce@RRAP{*#1\href{mailto:#2}{#2}}}\frontmatter@RRAPformat}
  {}{}
}%
\begin{document}

\title{Boson--fermion complementarity in a linear interferometer:\\ an identity relating the determinant and permanent of a matrix}

\author{Michael~G. Jabbour}
\affiliation{SAMOVAR, Télécom SudParis, Institut Polytechnique de Paris, 91120 Palaiseau, France}
\affiliation{Centre for Quantum Information and Communication, \'Ecole polytechnique de Bruxelles, CP 165/59, Universit\'e libre de Bruxelles, 1050 Brussels, Belgium}
\author{Nicolas~J. Cerf}
\affiliation{Centre for Quantum Information and Communication, \'Ecole polytechnique de Bruxelles, CP 165/59, Universit\'e libre de Bruxelles, 1050 Brussels, Belgium}

\begin{abstract}
    Bosonic and fermionic statistics are well known to give rise to antinomic behaviors, most notably boson bunching \textit{vs} fermion antibunching. Here, we establish a fundamental relation that combines bosonic and fermionic multiparticle interferences in an arbitrary linear interferometer. The bosonic and fermionic transition probabilities appear together in a single equation which constrains their values, hence expressing a boson--fermion complementarity that is independent of the details of the interferometer. For two particles in any interferometer, for example, it implies that the average between the bosonic and fermionic probabilities must coincide with the probability obeyed by classical particles. Crucially, this fundamental relation also provides a heretofore unknown mathematical identity connecting the squared moduli of the permanent and determinant of an arbitrary complex matrix, hence extending an identity by Muir dating from the nineteenth century.
\end{abstract}

\maketitle

\section{Introduction}

Quantum interference is responsible for some of the most counterintuitive phenomena allowed by the laws of physics. Crucially, the way identical particles interfere depends primarily on their statistics. On the one hand, the symmetrization of the bosonic many-body wavefunction favors the so-called bunching of identical bosons~\cite{Einstein}, as witnessed for instance by the celebrated Hong-Ou-Mandel effect~\cite{HOM} or Hanbury Brown and Twiss effect~\cite{HBTwiss,Kimble}. On the other hand, identical fermions have a tendency to antibunch as a consequence of the antisymmetrization of the fermionic many-body wavefunction dictated by the Pauli principle~\cite{Pauli}. This antibunching of fermions has more recently also been demonstrated through the Hanbury Brown and Twiss effect~\cite{Henny1999,Kiesel2002,Rom2006,Jeltes2007}.
\par

In addition to its value from the point of view of fundamental physics, quantum interference is a key resource in many technologies currently rapidly gaining broad interest, such as quantum computing~\cite{Ladd2010}, quantum cryptography~\cite{Gisin2002}, and
superconducting quantum interference devices~\cite{Vasyukov2013}. Strikingly, it turns out to be a crucial aspect of the so-called boson sampling paradigm~\cite{bosonsampling}, which focuses on the computational complexity of simulating the scattering of many identical bosons through a multimode linear interferometer. Boson sampling is generally regarded as an instance of a classically hard computational problem that can be efficiently solved by a quantum computer~\cite{Tillmann2013,Crespi2013,Broome2013,Carolan2014}. Interestingly, a similar conclusion has recently been reached for fermion sampling~\cite{Oszmaniec2022}, in which fermionic linear optics circuits fed with specifically chosen entangled input states were used to demonstrate a quantum computational advantage.
\par

The quantum interference of multiple identical  particles has been extensively studied since the discovery of the Hong-Ou-Mandel effect. More recently, there has been an attempt to compare the behavior of multiparticle interferences in terms of the statistics of the particles. While two fermions will always antibunch and two bosons will favor bunching, it has for instance been shown that the collective interference of more particles encompass much more diverse behaviors~\cite{Tichy2012a}. Recent studies have also focused on more general systems such as the interference between composite two-fermion bosons~\cite{Tichy2012b}, the study of many-body states that obey non-pairwise exchange symmetries~\cite{Tichy2017}, and partially distinguishable bosons and fermions~\cite{Spivak2022}.
\par

Multiparticle quantum interferences are notoriously difficult to characterize.
The transition probabilities governing such interferences can be expressed in terms of the permanent or determinant of a specific matrix, depending on whether we deal with bosons [see Eq.~\eqref{eq:defBos}] or fermions [see Eq.~\eqref{eq:defFer}]. This usually leads to cumbersome expressions which can be understood as originating from the many interfering paths accessible to the different particles. These expressions can of course be evaluated, but cannot easily be exploited analytically.
\par

Here, we establish a fundamental relation that governs bosonic and fermionic multiparticle interferences in an arbitrary linear interferometer. This relation is amazingly simple, considering the complexity of quantum interferences. To our knowledge, it is moreover the first time that bosonic and fermionic transition probabilities are combined in a single unifying equation. By constraining the values of the transition probabilities for both types of particles, this relation thus expresses some boson \textit{vs} fermion complementarity. This relation solely depends on the duality between symmetric and antisymmetric statistics and encompasses all linear couplings between the particles (i.e., couplings that preserve the total particle number). To prove our main result, we make use of the powerful mathematical tool of the generating function which, for bosons, enables an analytical calculation exploiting the Gaussian formalism as previously remarked in Ref.~\onlinecite{Jabbour-Cerf-PRR-2021} (see also Ref.~\onlinecite{Cerf2020}).
\par

The rest of the paper is organized as follows. We begin with a preliminary section where we briefly introduce the theory of bosonic and fermionic linear interferometers. We then present our main results. Theorem~\ref{theo:RecNmain} exhibits a novel relation that governs the boson--fermion complementarity. Theorem~\ref{theo:recNGen} then provides a fundamental identity connecting the squared moduli of the permanent and determinant of an arbitrary complex matrix. Next, we elaborate on the physical interpretation of the above complementarity. This is followed with a section detailing the proofs of our two main Theorems. Finally, we finish with a conclusion and some open problems.
\par

%%%%%%%%%%%%%%%%%%%%%%%%
%%%%%%%%%%%%%%%%%%%%%%%%
%%%%%%%% PRELIM %%%%%%%%
%%%%%%%%%%%%%%%%%%%%%%%%
%%%%%%%%%%%%%%%%%%%%%%%%

\section{Preliminaries}

An $N$-mode bosonic system~\cite{Weedbrook2012} is described by the tensor product of $N$ infinite-dimensional Hilbert spaces, each characterized by a pair of bosonic field operators $\hat{a}_s, \hat{a}_s^{\dagger}$, where $s \in [N]\coloneqq \left\lbrace 1, \cdots ,N\right\rbrace$ denotes the mode index. The field operators satisfy the bosonic commutation relations $[\hat{a}_s,\hat{a}_{s'}^\dag] = \delta_{s{s'}} \mathds{1}$ (where $\mathds{1}$ is the identity in infinite dimension),  $[\hat{a}_s,\hat{a}_{s'}] = [\hat{a}_s^\dag,\hat{a}_{s'}^\dag] = 0$. In contrast, an $N$-mode fermionic system~\cite{Szalay2021} is described by the tensor product of $N$ two-dimensional Hilbert spaces, with the other distinction that each mode is now associated with a pair of fermionic field operators $\hat{b}_s, \hat{b}_s^{\dagger}$ that satisfy the fermionic anticommutation relations $\{\hat{b}_s,\hat{b}_{s'}^\dag\} = \delta_{s{s'}} \mathds{1}$, $\{\hat{b}_s,\hat{b}_{s'}\} = \{\hat{b}_s^\dag,\hat{b}_{s'}^\dag\} = 0$. The quantum states of both systems can be written using their respective Fock spaces, whose basis states we denote as $\ket{\boldsymbol{i}}\coloneqq \prod\nolimits_{s=1}^N \ket{i_s}$ with $\boldsymbol{i} \in \mathbb{N}^N$ for bosons, where $\mathbb{N}$ is the set of natural numbers (including zero), or $\boldsymbol{i} \in \{0,1\}^N$ for fermions.

Consider now an $N$-mode linear interferometer whose effect on the mode operators is characterized in phase space by a unitary matrix $U$ of dimension $N$,
{which acts on the vector of field operators $\boldsymbol{\hat{c}} \coloneqq \left(\hat{c}_1, \cdots, \hat{c}_N \right)^T$ as follows:
\begin{equation}
    \boldsymbol{\hat{c}} \rightarrow U \boldsymbol{\hat{c}} =
    \left( \sum_{s=1}^N u_{1s} \hat{c}_s, \sum_{s=1}^N u_{2s} \hat{c}_s, \hdots, \sum_{s=1}^N u_{Ns} \hat{c}_s \right)^T,
\end{equation}
where $u_{ks}$ is the component of $U$ on the $k$\textsuperscript{th} row and the $s$\textsuperscript{th} column, while}
$\hat{c}_s = \hat{a}_s$ (or $ \hat{b}_s$) for a bosonic (or fermionic) mode.
The transformation effected by $U$ can equivalently be characterized in state space by a unitary operator $\hat{U}$ that acts on an $N$-mode bosonic (fermionic)  state $|\psi\rangle$ and is defined via the relation
{
\begin{equation}\label{eq:Uc}
    \hat{U}^{\dagger} \, \hat{c}_s \, \hat{U} = (U \boldsymbol{\hat{c}})_s, \quad \forall \, s=1,\cdots,N,
\end{equation}
that is, $\hat{U}^{\dagger} \, \boldsymbol{\hat{c}} \, \hat{U}=U\boldsymbol{\hat{c}}$ in a concise vectorial form. Note the slight abuse of notation in Eq.~\eqref{eq:Uc}, as $\hat{U}$ acts on the tensor product of $N$ operators, meaning we have written, e.g., $\hat{c}_1$ instead of $\hat{c}_1 \otimes \mathds{1} \otimes \cdots \otimes \mathds{1}$ on the left-hand side. The relation in Eq.~\eqref{eq:Uc}} models any linear coupling that conserves the total particle number, namely $\langle \psi | \hat{U}^{\dagger} \boldsymbol{\hat{c}}^\dagger \boldsymbol{\hat{c}} \, \hat{U} | \psi \rangle  =  \langle \psi | \boldsymbol{\hat{c}}^\dagger U^\dagger U \boldsymbol{\hat{c}}| \psi \rangle $ $ =  \langle \psi | \boldsymbol{\hat{c}}^\dagger \boldsymbol{\hat{c}} | \psi \rangle $. Note that the Hilbert space upon which the operator $\hat{U}$ acts depends on whether we deal with bosons or fermions. {As a result, $\hat{U}$ lives in the space of linear operators on the tensor product of $N$ infinite-dimensional Hilbert spaces in the former case, while it lives in the space of linear operators on the tensor product of $N$ Hilbert spaces of dimension $2$ in the later case. We use the same notation $\hat{U}$ in both cases, as the nature of $\hat{U}$ will be clear from the context.
In contrast, the matrix $U$ is the same in both cases, even though the physical implementation of the bosonic and fermionic interferometers defined by the same $U$ can be quite different.}

%%%%%%% Bosonic transition probabilities
The focus of this work concerns  the so-called transition probabilities in an arbitrary linear interferometer. For bosons, these are defined as $B_{\boldsymbol{n}}^{\left( \boldsymbol{i} \right)} \coloneqq | \bra{\boldsymbol{n}} \hat{U} \ket{\boldsymbol{i}} |^2$,
where $\boldsymbol{i} \coloneqq \left(i_1, \cdots, i_N\right) \in \mathbb{N}^N$ and $ \boldsymbol{n} \coloneqq \left(n_1, \cdots, n_N\right) \in \mathbb{N}^N$ denote the vectors of input and output occupation numbers, respectively.
These quantities can equivalently be expressed in phase space as~\cite{scheel2004}
\begin{equation} \label{eq:defBos}
    B_{\boldsymbol{n}}^{\left( \boldsymbol{i} \right)} = \frac{| \per (U_{\boldsymbol{n},\boldsymbol{i}}) |^2 } {\boldsymbol{n}! \, \boldsymbol{i}!},
\end{equation}
where $\boldsymbol{n}! = \prod_s n_s!$, $\boldsymbol{i}!=\prod_s i_s!$, with the product being over $s \in [N]$ and $\per(A)$ stands for the permanent of matrix $A$.
The matrix $U_{\boldsymbol{n},\boldsymbol{i}}$ is defined as follows: we discard all rows of $U$ corresponding to modes $s$ with occupation number $n_s=0$ in vector $\boldsymbol{n}$, before repeating each other row $s$ a number of times corresponding to its associated occupation number $n_s$ in $\boldsymbol{n}$. We do the same for the columns of $U$ but based on vector $\boldsymbol{i}$.
%%%%%%% Fermionic transition probabilities
Similarly, the fermionic transition probabilities $F_{\boldsymbol{n}}^{\left( \boldsymbol{i} \right)} \coloneqq | \bra{\boldsymbol{n}} \hat{U} \ket{\boldsymbol{i}} |^2$ can be expressed as~\cite{Oszmaniec2022}
\begin{equation} \label{eq:defFer}
    F_{\boldsymbol{n}}^{\left( \boldsymbol{i} \right)} = \frac{| \det (U_{\boldsymbol{n},\boldsymbol{i}}) |^2 } {\boldsymbol{n}! \, \boldsymbol{i}!},
\end{equation}
where $\det(A)$ stands for the determinant of matrix $A$.
In Eq.~\eqref{eq:defFer}, the denominator is actually always equal to $1$ since all fermionic occupation numbers are $0$ or $1$.
%%%%%%% Classical transition probabilities
For our purposes, it is worthwhile to look at a third case, mainly, that of classical (i.e., fully distinguishable) particles. The corresponding transition probabilities are given by
\begin{equation} \label{eq:defCla}
    C_{\boldsymbol{n}}^{\left( \boldsymbol{i} \right)} \coloneqq \frac{\per (M_{\boldsymbol{n},\boldsymbol{i}}) } {\boldsymbol{n}! \, \boldsymbol{i}!},
\end{equation}
where $M$ is a matrix whose entries are $M_{kl}=|u_{kl}|^2$ for $k,l=1,\cdots,N$, while the matrix $M_{\boldsymbol{n},\boldsymbol{i}}$ is constructed from $M$ in the same way as $U_{\boldsymbol{n},\boldsymbol{i}}$ in Eqs.~\eqref{eq:defBos} and~\eqref{eq:defFer}.

%%%%%%%%%%%%%%%%%%%%%%%%
%%%%%%%%%%%%%%%%%%%%%%%%
%%%%%%%% RESULTS %%%%%%%
%%%%%%%%%%%%%%%%%%%%%%%%
%%%%%%%%%%%%%%%%%%%%%%%%

\section{Main results}

Our first result is a fundamental complementarity relation that combines bosonic and fermionic interferences.

%%%%% THEOREM 1 %%%%%
\begin{theo}[\textbf{Boson--fermion complementarity}] \label{theo:RecNmain}
    Let $\boldsymbol{i}, \boldsymbol{n} \in \mathbb{N}^N$ be vectors of input and output occupation numbers and $B_{\boldsymbol{n}}^{\left( \boldsymbol{i} \right)}$ (or $F_{\boldsymbol{n}}^{\left( \boldsymbol{i} \right)}$) denote the bosonic (or fermionic) transition probability from $\ket{\boldsymbol{i}}$ to $\ket{\boldsymbol{n}}$ via a given linear interferometer. If $\boldsymbol{i} = \boldsymbol{n} = \boldsymbol{0}$, $B_{\boldsymbol{n}}^{\left( \boldsymbol{i}\right)} = F_{\boldsymbol{n}}^{\left( \boldsymbol{i}\right)} = 1$; else,
    \begin{equation} \label{eq:theo:RecNmain}
        \sum_{\substack{\boldsymbol{j},\boldsymbol{k}  \in \mathbb{N}^N }}  (\!-\!1)^{|\boldsymbol{j}|} \, F_{\boldsymbol{k}}^{\left(\boldsymbol{j} \right)} \, B_{\boldsymbol{n}-\boldsymbol{k}}^{\left( \boldsymbol{i}-\boldsymbol{j} \right)} = 0,
    \end{equation}
    where $|\boldsymbol{j}| \coloneqq \sum_{s=1}^N j_s$.
\end{theo}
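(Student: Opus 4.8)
The plan is to deduce Eq.~\eqref{eq:theo:RecNmain} from Lemma~\ref{lem:RecNmain2} essentially by relabelling, exploiting the observation already flagged above that the weights $\big|[U]_{\beta,\alpha}\big|^2$ in Eq.~\eqref{eq:RecNmain2} are themselves fermionic transition probabilities. Concretely, for subsets $\alpha,\beta\subseteq[N]$ with $|\alpha|=|\beta|=m$, the occupation-number vectors $\boldsymbol{1}_\alpha,\boldsymbol{1}_\beta$ have all entries in $\{0,1\}$, so $\boldsymbol{1}_\alpha!=\boldsymbol{1}_\beta!=1$ and the matrix $U_{\boldsymbol{1}_\beta,\boldsymbol{1}_\alpha}$ from Eq.~\eqref{eq:defFer} is precisely the $m\times m$ submatrix $[U]_{\beta,\alpha}$; hence $F_{\boldsymbol{1}_\beta}^{(\boldsymbol{1}_\alpha)}=\big|\det[U]_{\beta,\alpha}\big|^2=\big|[U]_{\beta,\alpha}\big|^2$. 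Substituting this identity into Eq.~\eqref{eq:RecNmain2} recasts Lemma~\ref{lem:RecNmain2} as $\sum_{m=0}^N(-1)^m\sum_{\alpha,\beta\in\mathcal{R}_m}F_{\boldsymbol{1}_\beta}^{(\boldsymbol{1}_\alpha)}\,B_{\boldsymbol{n}-\boldsymbol{1}_\beta}^{(\boldsymbol{i}-\boldsymbol{1}_\alpha)}=0$.

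Next I would show that this is exactly the left-hand side of Eq.~\eqref{eq:theo:RecNmain}. Since fermionic modes carry occupation number $0$ or $1$, $F_{\boldsymbol{k}}^{(\boldsymbol{j})}$ vanishes unless $\boldsymbol{j}=\boldsymbol{1}_\alpha$ and $\boldsymbol{k}=\boldsymbol{1}_\beta$ for some $\alpha,\beta\subseteq[N]$; and since $\hat U$ conserves particle number, $F_{\boldsymbol{1}_\beta}^{(\boldsymbol{1}_\alpha)}=0$ unless $|\alpha|=|\beta|$. The map $\alpha\mapsto\boldsymbol{1}_\alpha$ is a bijection between subsets of $[N]$ and $\{0,1\}^N$, so restricting the sum $\sum_{\boldsymbol{j},\boldsymbol{k}\in\mathbb{N}^N}$ to its nonvanishing terms reproduces the double sum over $\mathcal{R}_m$ with no overcounting, while on those terms $(-1)^{|\boldsymbol{j}|}=(-1)^{|\alpha|}=(-1)^m$. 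The conventions also match: terms with a negative component in $\boldsymbol{i}-\boldsymbol{j}$ or $\boldsymbol{n}-\boldsymbol{k}$ are set to zero exactly as in Lemma~\ref{lem:RecNmain2}, and $F$ automatically caps $\boldsymbol{j},\boldsymbol{k}$ componentwise by $\boldsymbol{1}$, so the sum in Eq.~\eqref{eq:theo:RecNmain} is effectively finite. The degenerate case $\boldsymbol{i}=\boldsymbol{n}=\boldsymbol{0}$ is immediate from $F_{\boldsymbol{0}}^{(\boldsymbol{0})}=B_{\boldsymbol{0}}^{(\boldsymbol{0})}=1$.

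A slicker, self-contained route would bypass Lemma~\ref{lem:RecNmain2} and work with generating functions directly: one checks, exactly as above, that the $2N$-variate GF of the weighted fermionic probabilities $(-1)^{|\boldsymbol{j}|}F_{\boldsymbol{k}}^{(\boldsymbol{j})}$ (with $\boldsymbol{x}$ dual to $\boldsymbol{j}$ and $\boldsymbol{z}$ dual to $\boldsymbol{k}$) equals $\sum_{m=0}^N(-1)^m\sum_{\alpha,\beta\in\mathcal{R}_m}[X]_\alpha[Z]_\beta\,\big|[U]_{\beta,\alpha}\big|^2$, which by Lemma~\ref{lem:GenFunc} is nothing but $1/g(\boldsymbol{x},\boldsymbol{z})$. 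Since the left-hand side of Eq.~\eqref{eq:theo:RecNmain} is the $(\boldsymbol{i},\boldsymbol{n})$ Taylor coefficient of the product of this GF with $g(\boldsymbol{x},\boldsymbol{z})$, i.e.\ of $g(\boldsymbol{x},\boldsymbol{z})\cdot g(\boldsymbol{x},\boldsymbol{z})^{-1}=1$, it equals $\delta_{\boldsymbol{i},\boldsymbol{0}}\,\delta_{\boldsymbol{n},\boldsymbol{0}}$, which is $0$ whenever $(\boldsymbol{i},\boldsymbol{n})\neq(\boldsymbol{0},\boldsymbol{0})$. Either way, the only real work is the bookkeeping of the subset-to-vector dictionary and of the vanishing/finiteness conventions; I do not expect a substantive obstacle here, since all the analytic content has been front-loaded into Lemmas~\ref{lem:GenFunc} and~\ref{lem:RecNmain2}.
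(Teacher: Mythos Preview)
Your proposal is correct and follows essentially the same approach as the paper: both deduce Theorem~\ref{theo:RecNmain} from Lemma~\ref{lem:RecNmain2} by identifying $\big|[U]_{\beta,\alpha}\big|^2=F_{\boldsymbol{1}_\beta}^{(\boldsymbol{1}_\alpha)}$ and then extending the subset sums to all of $\mathbb{N}^N$ using the vanishing of $F$ for occupation numbers exceeding one, particle-number conservation, and the negative-index convention. Your third-paragraph alternative (reading the convolution directly as the coefficient of $g\cdot g^{-1}=1$ via Lemma~\ref{lem:GenFunc}) is a clean repackaging of the same content that bypasses Lemma~\ref{lem:RecNmain2}, but it is not a genuinely different route.
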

\noindent  
As illustrated in Fig.~\ref{fig:Rec}, Theorem~\ref{theo:RecNmain} implies a fundamental link between the bosonic transition probabilities 
$B_{\boldsymbol{n}}^{\left( \boldsymbol{i} \right)}$ and fermionic transition probabilities
$F_{\boldsymbol{n}}^{\left( \boldsymbol{i} \right)}$, which prevails in any linear interferometer $U$. It solely reflects the role of particle statistics and is oblivious to the exact coupling between the particles. 
Interestingly, the relation can be understood in terms of a multivariate discrete convolution between the two sequences $B_{\boldsymbol{n}}^{\left( \boldsymbol{i} \right)}$ and $(\!-\!1)^{|\boldsymbol{i}|} F_{\boldsymbol{n}}^{\left( \boldsymbol{i} \right)}$, which must vanish as a consequence of the bosonic--fermionic duality. Each term in Eq.~\eqref{eq:theo:RecNmain} thus corresponds to converting ${|\boldsymbol{j}|}$ bosons into ${|\boldsymbol{j}|}$  fermions in the interferometer [note also that Eq.~\eqref{eq:theo:RecNmain} is invariant in the swapping of bosons into fermions and vice versa]. Since fermionic occupation numbers cannot exceed one, the sum over $\boldsymbol{j}$ in Eq.~\eqref{eq:theo:RecNmain} is actually restricted to $\boldsymbol{j} \leq \boldsymbol{1}_{[N]}$ (i.e., $j_s \leq 1$, $\forall s \in [N]$), and similarly for $\boldsymbol{k}$.
Here, $\boldsymbol{1}_{[N]}$ denotes a $N$-dimensional vector full of ones. Note that the sum over $\boldsymbol{j}$ is also restricted to $\boldsymbol{j} \leq \boldsymbol{i}$ (i.e., $j_s \leq i_s$, $\forall s \in [N]$) since bosonic occupation numbers cannot be negative. Similarly, the sum over $\boldsymbol{k}$ is restricted to  $\boldsymbol{k}\le \boldsymbol{n}$. Furthermore, the only nonvanishing terms are such that $|\boldsymbol{i}|= |\boldsymbol{n}|$ and $|\boldsymbol{j}|= |\boldsymbol{k}|$ since particle numbers are conserved in the linear interferometer.

Before interpreting Eq.~\eqref{eq:theo:RecNmain} in physical terms, let us introduce a mathematical result inspired by it. If expressed explicitly in terms of $U$ by making use of Eq.~\eqref{eq:defBos} and Eq.~\eqref{eq:defFer}, Eq.~\eqref{eq:theo:RecNmain} reads
\begin{equation} \label{eq:RecNmain3}
\sum_{\boldsymbol{j}, \boldsymbol{k} \in \mathbb{N}^N} (\!-\!1)^{|\boldsymbol{j}|} \, \frac{| \det (U_{\boldsymbol{k},\boldsymbol{j}}) |^2 \, | \per (U_{\boldsymbol{n}-\boldsymbol{k},\boldsymbol{i}-\boldsymbol{j}}) |^2  } {\boldsymbol{k}! \, \boldsymbol{j}! \, (\boldsymbol{n}-\boldsymbol{k})! \, (\boldsymbol{i}-\boldsymbol{j})! } = 0 .
\end{equation}
In the above, we have taken the convention that the determinant or permanent of a non-squared matrix is zero. Furthermore, since the matrix $U_{\boldsymbol{n},\boldsymbol{i}}$ is undefined when $n_s < 0 $ or $i_s < 0$ for some $s \in [N]$, we take the convention that $\per(U_{\boldsymbol{n},\boldsymbol{i}}) = 0$ in such a case.
It is tempting to ask whether Eq.~\eqref{eq:RecNmain3} holds for any matrix $A \in \mathbb{C}^{N \times N}$, rather than unitary matrices. This turns out to be the case, 
and constitutes our second result.
%the following generalization of Theorem~\ref{theo:RecNmain}
%%%%% THEOREM 2 %%%%%
\begin{theo}[\textbf{Permanent--determinant identity}] \label{theo:recNGen}
    Let $\boldsymbol{i}, \boldsymbol{n} \in \mathbb{N}^N$. For any matrix $A \in \mathbb{C}^{N \times N}$,
    \begin{equation} \label{eq:theo:RecNGen}
        \sum_{\boldsymbol{j}, \boldsymbol{k} \in \mathbb{N}^N} (\!-\!1)^{|\boldsymbol{j}|} \frac{| \det (A_{\boldsymbol{k},\boldsymbol{j}}) |^2 \, | \per (A_{\boldsymbol{n}-\boldsymbol{k},\boldsymbol{i}-\boldsymbol{j}}) |^2  } {\boldsymbol{k}! \, \boldsymbol{j}!\, (\boldsymbol{n}-\boldsymbol{k})! \, (\boldsymbol{i}-\boldsymbol{j})! } = 0.
    \end{equation}
\end{theo}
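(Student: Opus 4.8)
The plan is to obtain Theorem~\ref{theo:recNGen} from Theorem~\ref{theo:RecNmain} by a rescaling-plus-dilation argument, so that no fresh generating-function computation is needed. First I would record two elementary facts about the left-hand side of Eq.~\eqref{eq:theo:RecNGen}, viewed as a function of $A$. Since $\det(A_{\boldsymbol{k},\boldsymbol{j}})$ vanishes unless $|\boldsymbol{k}|=|\boldsymbol{j}|$ and $\boldsymbol{k},\boldsymbol{j}\le\boldsymbol{1}_{[N]}$, while $\per(A_{\boldsymbol{n}-\boldsymbol{k},\boldsymbol{i}-\boldsymbol{j}})$ vanishes unless $|\boldsymbol{n}|-|\boldsymbol{k}|=|\boldsymbol{i}|-|\boldsymbol{j}|$, every surviving term forces $|\boldsymbol{i}|=|\boldsymbol{n}|$; if $|\boldsymbol{i}|\ne|\boldsymbol{n}|$ the claim is the trivial $0=0$, so I may assume $|\boldsymbol{i}|=|\boldsymbol{n}|=:\mathcal{N}$, and then $\mathcal{N}\ge 1$ when $(\boldsymbol{i},\boldsymbol{n})\ne(\boldsymbol{0},\boldsymbol{0})$. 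Under this assumption each surviving term is a degree-$m$ determinant times a degree-$(\mathcal{N}-m)$ permanent times their complex conjugates, hence homogeneous of degree $\mathcal{N}$ in the entries of $A$ and of degree $\mathcal{N}$ in those of $\bar{A}$, so the left-hand side scales as $|\lambda|^{2\mathcal{N}}$ under $A\mapsto\lambda A$. It therefore suffices to prove the identity when $\lVert A\rVert_{\mathrm{op}}\le 1$: the case $A=0$ is immediate, and otherwise I rescale $A$ by $1/\lVert A\rVert_{\mathrm{op}}$.

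Next I would use a Naimark-type dilation: for $\lVert A\rVert_{\mathrm{op}}\le 1$, with singular value decomposition $A=V\Sigma W^{\dagger}$, $\Sigma=\mathrm{diag}(\sigma_1,\dots,\sigma_N)$, $\sigma_s\in[0,1]$, the block matrix
\[
\hat{A}=\begin{pmatrix}V & 0\\ 0 & \mathds{1}_N\end{pmatrix}
\begin{pmatrix}\Sigma & \sqrt{\mathds{1}_N-\Sigma^2}\\ \sqrt{\mathds{1}_N-\Sigma^2} & -\Sigma\end{pmatrix}
\begin{pmatrix}W^{\dagger} & 0\\ 0 & \mathds{1}_N\end{pmatrix}
\]
is a $2N\times 2N$ unitary whose top-left $N\times N$ block equals $A$. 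I then invoke Theorem~\ref{theo:RecNmain} in the explicit form~\eqref{eq:RecNmain3}, with $U$ replaced by $\hat{A}$, for the $2N$-mode interferometer $\hat{A}$ with input/output occupation vectors $\hat{\boldsymbol{i}}=(\boldsymbol{i},\boldsymbol{0}_N)$ and $\hat{\boldsymbol{n}}=(\boldsymbol{n},\boldsymbol{0}_N)$ in $\mathbb{N}^{2N}$. Since $(\hat{\boldsymbol{i}},\hat{\boldsymbol{n}})\ne(\boldsymbol{0},\boldsymbol{0})$ whenever $(\boldsymbol{i},\boldsymbol{n})\ne(\boldsymbol{0},\boldsymbol{0})$, the ``else'' branch applies and the corresponding sum over $\boldsymbol{j},\boldsymbol{k}\in\mathbb{N}^{2N}$ vanishes.

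It remains to check that this vanishing sum is literally Eq.~\eqref{eq:theo:RecNGen} for $A$. The constraints $\boldsymbol{j}\le\hat{\boldsymbol{i}}$, $\boldsymbol{k}\le\hat{\boldsymbol{n}}$ force the last $N$ components of $\boldsymbol{j},\boldsymbol{k}$ to vanish, so writing $\boldsymbol{j}=(\boldsymbol{j}_0,\boldsymbol{0}_N)$, $\boldsymbol{k}=(\boldsymbol{k}_0,\boldsymbol{0}_N)$ with $\boldsymbol{j}_0\le\boldsymbol{i}$, $\boldsymbol{k}_0\le\boldsymbol{n}$, the sum over $\mathbb{N}^{2N}$ collapses to one over $\boldsymbol{j}_0,\boldsymbol{k}_0\in\mathbb{N}^{N}$ with exactly the ranges of Eq.~\eqref{eq:theo:RecNGen}. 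Since the row and column supports of $\hat{A}_{\boldsymbol{k},\boldsymbol{j}}$ and $\hat{A}_{\hat{\boldsymbol{n}}-\boldsymbol{k},\hat{\boldsymbol{i}}-\boldsymbol{j}}$ lie inside $\{1,\dots,N\}$, where $\hat{A}$ coincides with $A$, one has $\det(\hat{A}_{\boldsymbol{k},\boldsymbol{j}})=\det(A_{\boldsymbol{k}_0,\boldsymbol{j}_0})$ and $\per(\hat{A}_{\hat{\boldsymbol{n}}-\boldsymbol{k},\hat{\boldsymbol{i}}-\boldsymbol{j}})=\per(A_{\boldsymbol{n}-\boldsymbol{k}_0,\boldsymbol{i}-\boldsymbol{j}_0})$, while the appended zeros leave the factorials ($0!=1$) and the sign ($(-1)^{|\boldsymbol{j}|}=(-1)^{|\boldsymbol{j}_0|}$) untouched; substituting turns Theorem~\ref{theo:RecNmain} for $\hat{A}$ into Eq.~\eqref{eq:theo:RecNGen} for $A$. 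Together with the first paragraph this establishes the identity for every $A\in\mathbb{C}^{N\times N}$. The one genuinely delicate point is this last bookkeeping step: one must check that the implicit ``terms with a negative entry vanish'' conventions of Theorem~\ref{theo:RecNmain}, combined with the repeated-row/column structure of $\hat{A}_{\boldsymbol{k},\boldsymbol{j}}$, reduce the $2N$-variable convolution to the $N$-variable one with no spurious terms. (Alternatively, one can bypass the dilation: the generating-function identity behind Lemma~\ref{lem:GenFunc}, together with the Cauchy--Binet expansion of $\det(\mathds{1}_N-A^{\dagger}ZAX)$, nowhere relies on unitarity of $U$ and hence holds for an arbitrary $A$; extracting the coefficient of $\prod_s x_s^{i_s}\prod_r z_r^{n_r}$ then gives Eq.~\eqref{eq:theo:RecNGen} directly and makes the ``generalization'' aspect manifest.)
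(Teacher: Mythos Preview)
Your proposal is correct and follows essentially the same route as the paper: embed a rescaled copy of $A$ as a submatrix of a larger unitary, apply Eq.~\eqref{eq:RecNmain3} with the extra modes unoccupied, and observe that the convolution collapses to the $N$-mode identity for $A$. The only cosmetic differences are that you invoke homogeneity up front and write out the Halmos dilation explicitly, whereas the paper cites the embedding lemma from Ref.~\cite{bosonsampling} and factors out the common $\varepsilon^{2|\boldsymbol{i}|}$ at the end.
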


\begin{figure}[t]
\center
    \includegraphics[width=.6\columnwidth]{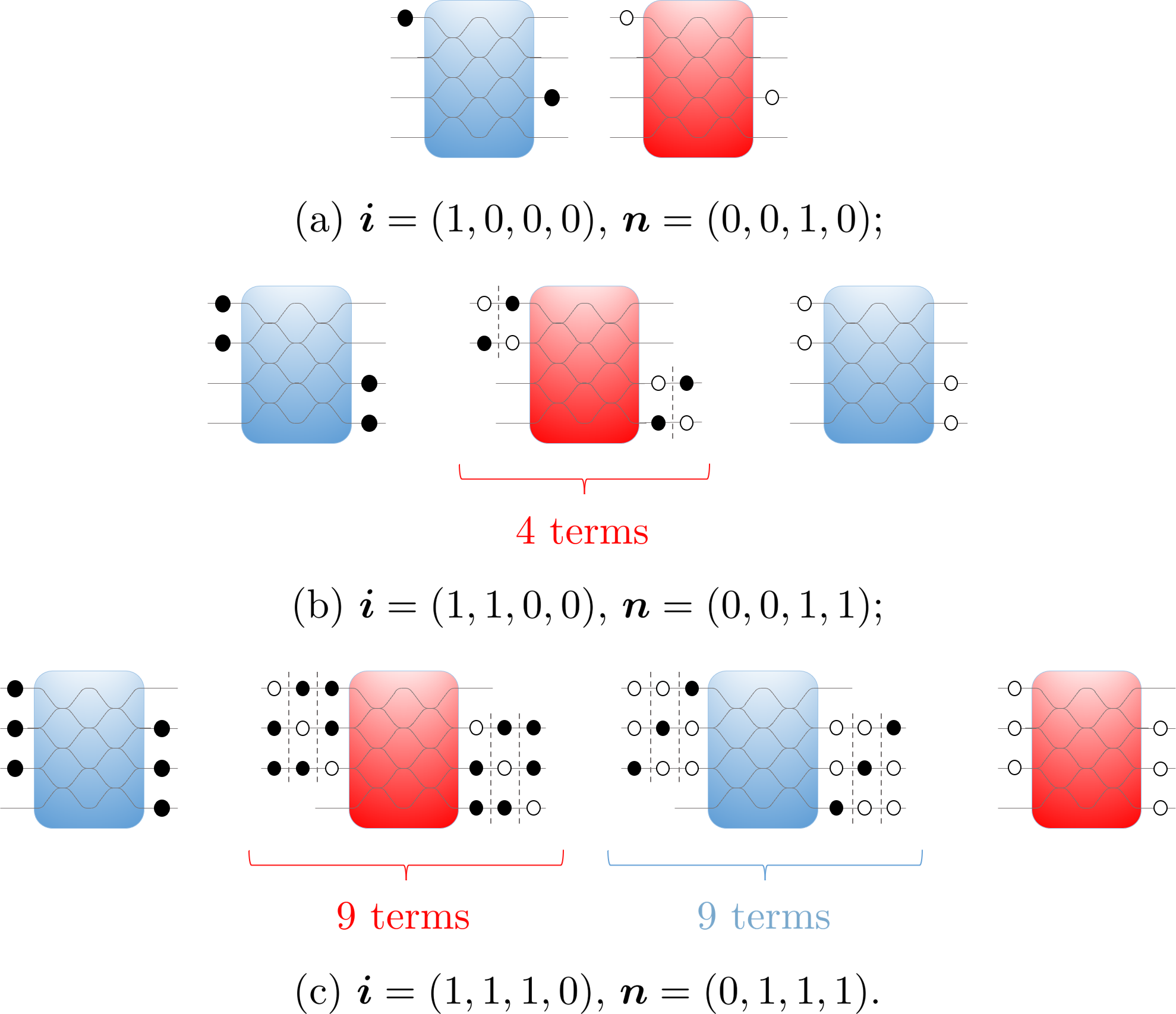}
    %\captionsetup{width=.9\columnwidth}
    \caption{\label{fig:Rec} Examples of the  components of the fundamental relation~\eqref{eq:theo:RecNmain} for an arbitrary interferometer with four modes ($N=4$) and at most one particle per mode. We consider different numbers of particles: (a) $|\boldsymbol{i}| = |\boldsymbol{n}| = 1$, (b) $|\boldsymbol{i}| = |\boldsymbol{n}| = 2$ and (c) $|\boldsymbol{i}| = |\boldsymbol{n}| = 3$. The full (open) circles represent bosons (fermions). The blue and red components correspond, respectively, to the terms with positive and negative signs in Eq.~\eqref{eq:theo:RecNmain}. The dashed vertical lines at the inputs and outputs of each interferometer separate different input and output patterns  (rather than multiple particles per input or output mode); the summation over all input and output patterns is implicit.}
\end{figure}

Equation~\eqref{eq:theo:RecNGen} calls to mind a recurrence relation between permanents and determinants established at the end of the 19\textsuperscript{th} century by Muir~\cite{Muir1897} [see also Ref.~\onlinecite{Chu1989}]. For some fixed value of $N$, denote by $\mathcal{R}_m$ the set of all ordered subsets of $[N]\coloneqq \left\lbrace 1, 2, \cdots ,N\right\rbrace$ with cardinality $m$. Denote by $[A]_{\alpha,\beta}$ the minor of the matrix obtained from $A$ by keeping the rows (columns) whose indices belong to subset $\alpha$ ($\beta$); furthermore, write $[A]_{\alpha}$ for the principal minor $[A]_{\alpha,\alpha}$.
Similarly, denote by $\{A\}_{\alpha,\beta}$ the permanent of the submatrix of $A$ obtained by keeping the rows (columns) whose indices belong to the subset $\alpha$ ($\beta$), and write $\{A\}_{\alpha}$ for $\{A\}_{\alpha,\alpha}$.
In these notations, Muir's relation~\cite{Muir1897} can be recast as follows. For any matrix $A \in \mathbb{C}^{N \times N}$,
\begin{equation} \label{eq:Muir}
    \sum_{m=0}^N (\!-\!1)^m \sum_{\alpha \in \mathcal{R}_m} [A]_{\alpha} \, \{A\}_{\alpha^{\mathrm{c}}} = 0,
\end{equation}
where $\alpha^{\mathrm{c}} \! \coloneqq \! [N] \! \setminus \! \alpha$ is the complementary set of $\alpha$ in $[N]$. Now, by choosing $\boldsymbol{i}=\boldsymbol{n}=\boldsymbol{1}_{[N]}$ in Eq.~\eqref{eq:theo:RecNGen} and using this same notation, one obtains the following corollary of Theorem~\ref{theo:recNGen}.

%%%%% Corollary 3 %%%%%
\begin{cor} \label{cor:MuirSq}
    For any matrix $A \in \mathbb{C}^{N \times N}$,
    \begin{equation} \label{eq:theo:RecNGen2} 
    \sum_{m=0}^N (\!-\!1)^m \sum_{\alpha, \beta \in \mathcal{R}_m} \big| [A]_{\alpha,\beta}\big|^2 \; \big| \{A\}_{\alpha^{\mathrm{c}},\beta^{\mathrm{c}}} \big|^2 = 0.
    \end{equation}
\end{cor}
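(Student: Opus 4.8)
The plan is to obtain Corollary~\ref{cor:MuirSq} as the special case $\boldsymbol{i}=\boldsymbol{n}=\boldsymbol{1}_{[N]}$ of Theorem~\ref{theo:recNGen}. First I would substitute $\boldsymbol{i}=\boldsymbol{n}=\boldsymbol{1}_{[N]}$ into Eq.~\eqref{eq:theo:RecNGen} and note that the summation collapses: the constraints $\boldsymbol{0}\le\boldsymbol{j}\le\boldsymbol{i}$ and $\boldsymbol{0}\le\boldsymbol{k}\le\boldsymbol{n}$ force every entry of $\boldsymbol{j}$ and $\boldsymbol{k}$ to lie in $\{0,1\}$, so $\boldsymbol{j}=\boldsymbol{1}_{\alpha}$ and $\boldsymbol{k}=\boldsymbol{1}_{\beta}$ for unique subsets $\alpha,\beta\subseteq[N]$. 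The double sum over $\boldsymbol{j},\boldsymbol{k}\in\mathbb{N}^N$ then becomes a double sum over $\alpha,\beta$, with $\boldsymbol{n}-\boldsymbol{k}=\boldsymbol{1}_{\beta^{\mathrm{c}}}$, $\boldsymbol{i}-\boldsymbol{j}=\boldsymbol{1}_{\alpha^{\mathrm{c}}}$, all four factorials $\boldsymbol{j}!,\boldsymbol{k}!,(\boldsymbol{i}-\boldsymbol{j})!,(\boldsymbol{n}-\boldsymbol{k})!$ equal to $1$, and the sign reducing to $(\!-\!1)^{|\boldsymbol{j}|}=(\!-\!1)^{|\alpha|}$.

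Next I would unpack the matrices. By the construction of $A_{\boldsymbol{n},\boldsymbol{i}}$ recalled below Eq.~\eqref{eq:defBos}, when the occupation vectors are $0/1$ no row or column is duplicated, so $A_{\boldsymbol{1}_{\beta},\boldsymbol{1}_{\alpha}}$ is simply the submatrix of $A$ with rows indexed by $\beta$ and columns indexed by $\alpha$; hence $\det(A_{\boldsymbol{1}_{\beta},\boldsymbol{1}_{\alpha}})=[A]_{\beta,\alpha}$ and $\per(A_{\boldsymbol{1}_{\beta^{\mathrm{c}}},\boldsymbol{1}_{\alpha^{\mathrm{c}}}})=\{A\}_{\beta^{\mathrm{c}},\alpha^{\mathrm{c}}}$. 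Invoking the convention that the determinant and permanent of a non-square matrix vanish, the $(\alpha,\beta)$ term survives only when $\beta$ and $\alpha$ have the same cardinality (equivalently, when $\beta^{\mathrm{c}}$ and $\alpha^{\mathrm{c}}$ do), i.e. when $\alpha,\beta\in\mathcal{R}_m$ with $m=|\alpha|$. Grouping the surviving terms by this common cardinality gives
\begin{equation}
\sum_{m=0}^N(\!-\!1)^m\sum_{\alpha,\beta\in\mathcal{R}_m}\big|[A]_{\beta,\alpha}\big|^2\,\big|\{A\}_{\beta^{\mathrm{c}},\alpha^{\mathrm{c}}}\big|^2=0 ,
\end{equation}
and, since both $\alpha$ and $\beta$ range over all of $\mathcal{R}_m$, renaming the dummy indices $\alpha\leftrightarrow\beta$ reproduces Eq.~\eqref{eq:theo:RecNGen2} verbatim.

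I do not expect a genuine obstacle here: the corollary is a pure re-indexing of Theorem~\ref{theo:recNGen}. The only points demanding a little care are bookkeeping ones --- checking that the vanishing of ``off-cardinality'' terms is applied consistently to both the determinant factor and the permanent factor, that the empty-matrix conventions $[A]_{\emptyset,\emptyset}=\{A\}_{\emptyset,\emptyset}=1$ correctly cover the endpoints $m=0$ and $m=N$, and that the final $\alpha\leftrightarrow\beta$ relabeling (needed only to match the cosmetic ordering of indices in the statement) is legitimate. As a consistency check one may evaluate the identity by hand for $N=1$, where it is the trivial cancellation $|A_{11}|^2-|A_{11}|^2=0$, and for $N=2$, where it becomes $|\per A|^2+|\det A|^2=2\,(|A_{11}A_{22}|^2+|A_{12}A_{21}|^2)$.
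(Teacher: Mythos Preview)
Your proposal is correct and follows exactly the approach the paper takes: the corollary is obtained by specializing Theorem~\ref{theo:recNGen} to $\boldsymbol{i}=\boldsymbol{n}=\boldsymbol{1}_{[N]}$, and the paper says nothing more than this. Your write-up simply spells out the bookkeeping (the reduction of $\boldsymbol{j},\boldsymbol{k}$ to indicator vectors, the collapse of the factorials, the non-square vanishing, and the cosmetic $\alpha\leftrightarrow\beta$ swap) that the paper leaves implicit.
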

\noindent While there are clear similarities between Eq.~\eqref{eq:Muir} and Eq.~\eqref{eq:theo:RecNGen2}, the latter does not seem to be a trivial consequence of the former (and vice versa). In particular, Eq.~\eqref{eq:theo:RecNGen2} involves all the minors of $A$, while Eq.~\eqref{eq:Muir} includes only its principal minors.

%%%%%%%%%%%%%%%%%%%%%%%%
%%%%%%%%%%%%%%%%%%%%%%%%
%%%% INTERPRETATION %%%%
%%%%%%%%%%%%%%%%%%%%%%%%
%%%%%%%%%%%%%%%%%%%%%%%%

\section{Physical interpretation}

\subsection{Comparison with classical interferences}

It is instructive to interpret the different terms of Eq.~\eqref{eq:theo:RecNmain} by comparing the latter with its counterpart for classical particles. In that case,
the laws of probability simply assert that
\begin{equation} \label{eq:theo:RecNclas}
    C_{\boldsymbol{n}}^{\left( \boldsymbol{i} \right)} = \sum_{\substack{\boldsymbol{k} \in \mathbb{N}^N }} C_{\boldsymbol{k}}^{\left(\boldsymbol{j} \right)} \, C_{\boldsymbol{n}-\boldsymbol{k}}^{\left( \boldsymbol{i}-\boldsymbol{j} \right)}, \quad \forall \, \boldsymbol{j} \in \mathbb{N}^N, \quad \boldsymbol{j}\le \boldsymbol{i}.
\end{equation}
That is, $C_{\boldsymbol{n}}^{\left( \boldsymbol{i} \right)}$ can be computed by breaking down the input pattern $\boldsymbol{i}$ into any two patterns $\boldsymbol{j}$ and $\boldsymbol{i-j}$  before summing over all the possible cases leading to pattern $\boldsymbol{n}$ at the output, which gives the convolution of Eq.~\eqref{eq:theo:RecNclas} [see, for instance, Section~D of Ref.~\onlinecite{Jabbour-Cerf-PRR-2021} for details in the case $N=2$]. We now understand that each term in the summation over $\boldsymbol{j}$ in Eq.~\eqref{eq:theo:RecNmain} is analogous to the right-hand side of Eq.~\eqref{eq:theo:RecNclas}, albeit without any negative sign. Indeed, for each $\boldsymbol{j}$ in Eq.~\eqref{eq:theo:RecNmain}, we have a sum over all output patterns of  $\boldsymbol{n}$ particles composed of $\boldsymbol{k}$ fermions and $\boldsymbol{n-k}$ bosons.
It must be emphasized that interferences are accounted for by assigning negative signs to some probabilities in Eq.~\eqref{eq:theo:RecNmain}, which is in sharp contrast with the usual modeling of quantum interferences in terms of complex probability amplitudes [see Ref.~\onlinecite{Jabbour-Cerf-PRR-2021}]. The existence of {probabilities with negative prefactors} in Eq.~\eqref{eq:theo:RecNmain}
further indicates a profound departure from classical interferences as modeled in Eq.~\eqref{eq:theo:RecNclas}, in addition to the fact that Eq.~\eqref{eq:theo:RecNmain} must necessarily incorporate both bosons and fermions.

Let us now analyze the implications of Theorem~\ref{theo:RecNmain}
in the special case where there is at most one particle per mode, which is when the boson--fermion complementarity is the most evident since there cannot anyway be more than one fermion per mode.
Consider first the simplest case of a single particle entering an arbitrary $N$-mode linear interferometer. With no loss of generality, we may consider the probability of detecting the particle in the first output mode when it is sent in the first input mode. Eq.~\eqref{eq:theo:RecNmain} simply yields  $B_1^{(1)}-F_1^{(1)}=0$,
where the subscript (superscript) denotes here the output (input) mode assignment of the single particle; see Fig.~\ref{fig:Rec}a where we illustrate  $B_3^{(1)}$ and $F_3^{(1)}$. As expected, the particle statistics is irrelevant for a single particle, so that
\begin{equation} \label{eq:1part}
B_1^{(1)} = F_1^{(1)}  = C_1^{(1)},
\end{equation}
where $C_1^{(1)} = |u_{11}|^2$ denotes the classical $1\to 1$ transition probability in the $N$-mode interferometer $U$.

\subsection{Two-particle interferences}
Things become nontrivial for two particles.
We consider with no loss of generality the probability of detecting one particle in each of the first two output modes given that a particle is sent in each of the first two input modes, that is $B/F/C^{(12)}_{12}$.
Eq.~\eqref{eq:theo:RecNmain} yields (see Fig.~\ref{fig:Rec}b displaying, for instance, $B^{(12)}_{34}$ and $F^{(12)}_{34}$)
\begin{equation}
    B^{(12)}_{12} - B^{(2)}_2 F^{(1)}_1 - B^{(2)}_1 F^{(1)}_2 - B^{(1)}_2 F^{(2)}_1 - B^{(1)}_1 F^{(2)}_2 
+F^{(12)}_{12} = 0.
\end{equation}
Using $B^{(i)}_{j}=F^{(i)}_{j}=C^{(i)}_{j}=|u_{ji}|^2$ as in Eq.~\eqref{eq:1part}, we obtain
\begin{align}
B^{(12)}_{12} +F^{(12)}_{12} 
&= 2 \,|u_{11}|^2|u_{22}|^2 + 2 \,|u_{12}|^2|u_{21}|^2  \nonumber\\
&= 2 \, \text{perm}(M)
\nonumber\\
&= 2 \, C^{(12)}_{12}.
\label{eq_N=2_complementarity}
\end{align}
This fundamental complementarity relation implies that the classical 2-particle transition probability is the \textit{average} between the bosonic and fermionic probabilities. In other words, the tendency of bosons to bunch in modes 1 and 2 (measured by $C^{(12)}_{12} -B^{(12)}_{12}$, i.e., the defect of coincidence probability with respect to classical particles)
is precisely equal to the tendency of fermions to antibunch in modes 1 and 2 (measured by $F^{(12)}_{12} - C^{(12)}_{12}$, i.e., the excess of coincidence probability with respect to classical particles). In the Hong-Ou-Mandel effect ($N=2$), we have indeed 
$B^{(12)}_{12} = 0$, $F^{(12)}_{12} =1$, and $C^{(12)}_{12}=1/2$, but remember that Eq.~\eqref{eq_N=2_complementarity} holds true for any two input and output modes of any linear interferometer (with arbitrary $N$).
Note that there exist situations (for $N>2$) where two bosons antibunch, i.e., $ B^{(12)}_{12}>C^{(12)}_{12}$, but then Eq.~\eqref{eq_N=2_complementarity} implies that two fermions must bunch for the same interferometer, i.e., 
$F^{(12)}_{12} < C^{(12)}_{12}$. We name the two-particle instances where 
$B^{(12)}_{12}<C^{(12)}_{12}<F^{(12)}_{12}$ as
\textit{natural}, whereas those with 
$B^{(12)}_{12}>C^{(12)}_{12}>F^{(12)}_{12}$ are called \textit{antinatural}.

{

\subsection{Three-particle interferences}

In the case of three particles, we may consider for example the probability of detecting one particle in each of the first three output modes when a particle is sent in each of the first
three input modes, that is $B/F/C^{(123)}_{123}$. Equation~\eqref{eq:theo:RecNmain} then gives (see Fig.~\ref{fig:Rec}c displaying, for instance, $B^{(123)}_{234}$ and $F^{(123)}_{234}$)
\begin{equation} \label{eq_N=3_complementarity}
B^{(123)}_{123} - \sum_{i,j=1}^3 B^{(\neg i)}_{\neg j} F^{(i)}_{j} + \sum_{i,j=1}^3 B^{(i)}_{j} F^{(\neg i)}_{\neg j} 
-F^{(123)}_{123} = 0,
\end{equation}
where $B^{(\neg i)}_{\neg j}$
(resp. $F^{(\neg i)}_{\neg j}$) stands for the two-particle bosonic (resp. fermionic) transition probability when the particles are sent in the two modes $\neg i \coloneqq [3]\!\setminus\! i$  and are detected in the two modes $\neg j \coloneqq [3]\!\setminus\! j$. It is not surprising that the formula is less straightforward for three (or more) particles since bosonic or fermionic statistics intrinsically concerns the exchange of two particles. Nevertheless, we may rewrite Eq.~\eqref{eq_N=3_complementarity} as
\begin{equation} 
B^{(123)}_{123} 
-F^{(123)}_{123} = \sum_{i,j=1}^3 C^{(i)}_{j} \left( B^{(\neg i)}_{\neg j}  -  F^{(\neg i)}_{\neg j} \right)  \label{eq:diff_for_N=3_a},
\end{equation}
where we have used the fact that single-particle transition probabilities are the same for bosons, fermions, and classical particles, see Eq.~\eqref{eq:1part}. Equation~\eqref{eq:diff_for_N=3_a} implies that the ``$B-F$'' difference  for three particles is proportional to the average of all possible two-particle ``$B-F$'' differences weighted by the associated probabilities $C^{(i)}_{j}/3$.
Hence, if the two-particle transitions are, on average, natural (i.e., $B<C<F$), then the three-particle transitions must be natural too (i.e., $B<F$; here, $C$ is not the average of $B$ and $F$). Of course, the same is true for antinatural transitions.
This can be illustrated with three particles impinging on a 3-mode Fourier interferometer,  $u_{kl} = e^{- 2 \pi i k l/3}/\sqrt{3}$, where $i$ represents the imaginary unit. The three-particle transition probabilities are $B^{(123)}_{123}=1/3$ and $F^{(123)}_{123}=1$, so the three-particle behavior is natural. This is consistent with the fact that all two-particle transition probabilities are natural too (in this simple example, they are all equal to  
$B^{(12)}_{12}=1/9$ and
$F^{(12)}_{12}=1/3$).

We may also reexpress the right-hand side of Eq.~\eqref{eq:diff_for_N=3_a} without any two-particle fermionic transition probabilities, that is, 
\begin{eqnarray}
B^{(123)}_{123} 
-F^{(123)}_{123} &=& 2 \sum_{i,j=1}^3 C^{(i)}_{j} \left( B^{(\neg i)}_{\neg j}  -  C^{(\neg i)}_{\neg j} \right)  \label{eq:diff_for_N=3_b} \\
&=& 2 \sum_{i,j=1}^3 C^{(i)}_{j}  B^{(\neg i)}_{\neg j}  - 6 \, C^{(123)}_{123}
\label{eq:diff_for_N=3_c}
\end{eqnarray}
where we have used the relation between $B^{(\neg i)}_{\neg j}$ and $F^{(\neg i)}_{\neg j}$ for two particles coming from Eq.~\eqref{eq_N=2_complementarity}, as well as the Laplace expansion formula for the permanent, namely
$C^{(123)}_{123}=\sum_{i=1}^3 C^{(i)}_{j}
C^{(\neg i)}_{\neg j}$,  $\forall j\in [3]$. Conversely,
we may write relations similar to Eqs.~\eqref{eq:diff_for_N=3_b} or~\eqref{eq:diff_for_N=3_c} but in terms of $F^{(\neg i)}_{\neg j}$ instead of $B^{(\neg i)}_{\neg j}$, namely
\begin{eqnarray} %\nonumber
B^{(123)}_{123} 
-F^{(123)}_{123} 
&=& 2 \sum_{i,j=1}^3 C^{(i)}_{j} \left( C^{(\neg i)}_{\neg j}  -  F^{(\neg i)}_{\neg j} \right)  \label{eq:second_diff_for_N=3_b} \\
&=& 6 \, C^{(123)}_{123} - 2 \sum_{i,j=1}^3  C^{(i)}_{j}  F^{(\neg i)}_{\neg j}  
\label{eq:second_diff_for_N=3_c}
\end{eqnarray}
This last expression can even be further simplified in case the size of the interferometer is $N=3$ (the three particles are sent in a 3-mode interferometer). Then, we have $F^{(\neg i)}_{\neg j}=|u_{ji}|^2=C^{(i)}_{j}$  as a consequence of the unitarity of $U$ (more generally, the state of $N-1$ fermions in $N$ modes is equivalent to the state of a single fermion -- a hole in the sea of $N$ fermions -- which itself behaves as a classical particle in the interferometer). Therefore, we may write the three-particle ``$B-F$'' difference in terms of classical transition probabilities only, namely
\begin{equation} 
B^{(123)}_{123} 
-F^{(123)}_{123} 
= 6 \, C^{(123)}_{123} - 2 \sum_{i,j=1}^3  \left(C^{(i)}_{j}\right)^2    
\label{eq:third_diff_for_N=3_c}
\end{equation}
This relation can easily be verified in the above example of three particles impinging on a 3-mode Fourier interferometer, as we have
$C^{(123)}_{123}=2/9$ and all nine single-particle transition probabilities $C^{(i)}_j=1/3$.

The generalization to four (or more) particles seems increasingly involved, but, from these examples, we observe a clear structure in the outcome of Eq.~\eqref{eq:theo:RecNmain} as it constraints the sum ``$B+F$'' of the bosonic and fermionic probabilities for even particle numbers, whereas it constraints their difference ``$B-F$'' for odd particle numbers.
This hints at the existence of recurrence relations for these sums and differences, which deeply connect bosonic and fermionic interferences.
}

%%%%%%%%%%%%%%%%%%%%%%%%
%%%%%%%%%%%%%%%%%%%%%%%%
%%%%%%%% PROOFS %%%%%%%%
%%%%%%%%%%%%%%%%%%%%%%%%
%%%%%%%%%%%%%%%%%%%%%%%%

\section{Proofs}

\subsection{Proof of Theorem~\ref{theo:RecNmain}}

Theorem~\ref{theo:RecNmain} expresses a relation that combines bosonic and fermionic multiparticle interferences.
%%%%%%% Generating function
The corresponding transition probabilities are, in general, complicated to express analytically. For instance, with bosons, the difficulty is that the input and output Fock states are non-Gaussian, while the unitary $\hat{U}$ modeling the linear interferometer is Gaussian. As was shown in Ref.~\onlinecite{Jabbour-Cerf-PRR-2021} for the special case of two modes ($N=2$), this problem can be circumvented by exploiting the generating function (GF) of the transition probabilities.
The $2 N$-variate GF of $B_{\boldsymbol{n}}^{\left( \boldsymbol{i} \right)}$ is defined as
\begin{equation}
    g(\boldsymbol{x},\boldsymbol{z}) \coloneqq \sum_{\boldsymbol{i} \in \mathbb{N}^N} \sum_{\boldsymbol{n} \in \mathbb{N}^N}  \left( \prod_{s=1}^N x_s^{i_s} \right) \left( \prod_{r=1}^N z_r^{n_r} \right) B_{\boldsymbol{n}}^{\left( \boldsymbol{i} \right)},
    \label{eq:def:g}
\end{equation}
where $\boldsymbol{x}, \boldsymbol{z} \in [0,1)^N$.
This function of the  vectors of input and output dual variables $\boldsymbol{x} \coloneqq \left(x_1, \cdots, x_N\right)$ and $ \boldsymbol{z} \coloneqq \left(z_1, \cdots, z_N\right)$ encapsulates all transition probabilities since we have
\begin{equation}
\partial_{x_1}^{\,i_1}\cdots \partial_{x_N}^{\,i_N}\, \partial_{z_1}^{\,n_1}\cdots \partial_{z_N}^{\,n_N} \, g(\boldsymbol{x},\boldsymbol{z})\big|_{\boldsymbol{x}=\boldsymbol{z}=\boldsymbol{0}} = \boldsymbol{i}! \, \boldsymbol{n}! \, B_{\boldsymbol{n}}^{\left( \boldsymbol{i} \right)},
\end{equation}
where $\boldsymbol{0} \coloneqq \left(0, \cdots, 0\right)$. The power of the GF is witnessed by the fact that it becomes proportional to a transition probability between input and output states that are now Gaussian, making it easy to compute (see Ref.~\onlinecite{Jabbour-Cerf-PRR-2021} for $N=2$), which in turn gives rise to an elegant relation between the transition probabilities as materialized by Theorem~\ref{theo:RecNmain}.

To prove Theorem~\ref{theo:RecNmain}, we begin by expressing the GF of the bosonic transition probabilities. This is encompassed in the following Lemma.

%%%%% LEMMA 1 %%%%%
\begin{lem} \label{lem:GenFunc}
    Let $\boldsymbol{x}, \boldsymbol{z} \in [0,1)^N$. The generating function $g(\boldsymbol{x},\boldsymbol{z})$ of the bosonic transition probabilities via a linear interferometer given by the unitary matrix $U$ satisfies
    \begin{equation}\label{eq:lem:GenFunc}
        \frac{1}{g(\boldsymbol{x},\boldsymbol{z})} = \sum_{m=0}^N (\!-\!1)^m \!\! \sum_{\alpha, \beta \in \mathcal{R}_m} [Z]_{\beta} \, \Big| [U]_{\beta,\alpha}\Big|^2 \,  [X]_{\alpha},
    \end{equation}
    where $X = \mathrm{diag}(\boldsymbol{x})$, $Z = \mathrm{diag}(\boldsymbol{z})$, and $[U]_{\beta,\alpha}$ is the minor of the matrix obtained from $U$ by keeping the rows (columns) whose indices belong to the set $\beta$ ($\alpha$).
\end{lem}
\noindent Remember that $\mathcal{R}_m$ stands for the set of all ordered subsets of $[N]\coloneqq \left\lbrace 1, 2, \cdots ,N\right\rbrace$ with cardinality $m$. There are $2^N$ possible subsets of $[N]$, starting from the empty set $\emptyset$ to the entire set $[N]$, and we group them in Eq.~\eqref{eq:lem:GenFunc} according to their cardinality. For example, $\alpha\in \mathcal{R}_m$ means that $\alpha \subseteq [N]$ and $|\alpha|=m$. The size of $\mathcal{R}_m$ is the number of bipartitions of $[N]$ into $m$ elements and $N-m$ elements, that is, $\binom{N}{m}$.
\begin{proof}[Proof of Lemma~\ref{lem:GenFunc}]
    First, a generalization of MacMahon's master theorem~\cite{MacMahon,Chabaud2022} implies that Eq.~\eqref{eq:def:g} can be rewritten as
    \begin{equation}
        g(\boldsymbol{x},\boldsymbol{z}) = (\det \left[\mathds{1}_N - U^{\dagger} Z U X \right])^{-1},    
    \end{equation}
    where $\mathds{1}_N$ is the identity matrix of dimension $N$.
    Defining $\bar{g}(\boldsymbol{x},\boldsymbol{z}) \coloneqq \left( g(\boldsymbol{x},\boldsymbol{z}) \right)^{-1}$ for notational simplicity, and using the Leibniz formula for the determinant, we have
    \begin{equation}
        \bar{g}(\boldsymbol{x},\boldsymbol{z}) = \sum_{\sigma \in S_N} \left( \mathrm{sgn}(\sigma) \prod_{i=1}^N \left[ \delta_{i,\sigma_i} - c_{i,\sigma_i} \right] \right),
    \end{equation}
    where $S_N$ stands for the symmetric group and 
    \begin{equation}
        c_{i,\sigma_i} \coloneqq \sum_{k=1}^N {u_{ki}^*} u_{k \sigma_i} x_{\sigma_i} z_k .
    \end{equation}
    Denote by $\mathcal{P}_N$ the power set of $[N]$ and define $\bar{\nu} \coloneqq [N] \setminus \nu$, where $\nu\in \mathcal{P}_N$ denote any subset of $[N]$ and $\bar{\nu}\in \mathcal{P}_N$ its complementary subset. We have
    \begin{equation}
        \prod_{i=1}^N \left[ \delta_{i,\sigma_i} - c_{i,\sigma_i} \right] = \sum_{\nu \in \mathcal{P}_N} (-1)^{|\bar{\nu}|} \bigg( \prod_{i \in \nu} \delta_{i,\sigma_i} \bigg) \bigg( \prod_{j \in \bar{\nu}} c_{j,\sigma_j} \bigg).
        \label{eq:intermediate1}
    \end{equation}
    The right-hand most product in the above can be written as
    \begin{equation}
        \prod_{j \in \bar{\nu}} c_{j,\sigma_j} = \sum_{\alpha \in [N]^{|\bar{\nu}|}} \prod_{k=1}^{|\bar{\nu}|} {u_{\alpha_k \bar{\nu}_k}^*} u_{\alpha_k \sigma_{\bar{\nu}_k}} x_{\sigma_{\bar{\nu}_k}} z_{\alpha_k},
    \end{equation}
    where $[N]^{|\bar{\nu}|}$ represents the $|\bar{\nu}|$-fold Cartesian product of the set $[N]$ with itself.
    Using this and splitting the summation over $\nu \in \mathcal{P}_N$ in Eq.~\eqref{eq:intermediate1} into two summations, one over the cardinality $m = |\nu|$ going from $0$ to $N$, and one over all subsets $\nu$ of cardinality $m$, we obtain
    \begin{equation}
        \bar{g}(\boldsymbol{x},\boldsymbol{z}) = \sum_{m=0}^N (-1)^{N-m} \sum_{\substack{\nu \in \mathcal{P}_N \\ |\nu| = m}} \sum_{\alpha \in [N]^{N-m}} \sum_{\sigma \in S_N} \mathrm{sgn}(\sigma) \bigg( \prod_{i \in \nu} \delta_{i,\sigma_i} \bigg) \prod_{k=1}^{N-m} {u_{\alpha_k \bar{\nu}_k}^*} u_{\alpha_k \sigma_{\bar{\nu}_k}} x_{\sigma_{\bar{\nu}_k}} z_{\alpha_k}.
    \end{equation}
    The set $[N]^{|\nu|}$ is the set of all possible $|\nu|$-element sets , whose elements are between $1$ and $N$, as well as all their possible permutations. If we denote by $Q^{(N)}_{|\nu|}$ the set of all ordered $|\nu|$-element sets whose elements are between $1$ and $N$, we have
    \begin{equation}
        \sum_{\alpha \in [N]^{N-m}} \cdots = \sum_{\beta \in Q^{(N)}_{N-m}} \sum_{\alpha \in \pi \left(\beta\right)} \cdots,
    \end{equation}
    where $\pi \left(\beta\right)$ represents the set of all possible permutations of $\beta$. With this, we have
    \begin{equation}
        \bar{g}(\boldsymbol{x},\boldsymbol{z}) = \sum_{m=0}^N (-1)^{N-m} \sum_{\substack{\nu \in \mathcal{P}_N \\ |\nu| = m}} \sum_{\beta \in Q^{(N)}_{N-m}} \sum_{\sigma \in S_N} \mathrm{sgn}(\sigma) \bigg( \prod_{i \in \nu} \delta_{i,\sigma_i} \bigg) \sum_{\alpha \in \pi \left(\beta\right)} \prod_{k=1}^{N-m} {u_{\alpha_k \bar{\nu}_k}^*} u_{\alpha_k \sigma_{\bar{\nu}_k}} x_{\sigma_{\bar{\nu}_k}} z_{\alpha_k}.
    \end{equation}
    Now, we may factor out
    \begin{equation}
            \sum_{\alpha \in \pi \left(\beta\right)} \prod_{k=1}^{N-m} {u_{\alpha_k \bar{\nu}_k}^*} u_{\alpha_k \sigma_{\bar{\nu}_k}} x_{\sigma_{\bar{\nu}_k}} z_{\alpha_k}
            = \bigg( \sum_{\alpha \in \pi \left(\beta\right)} \prod_{k=1}^{N-m} {u_{\alpha_k \bar{\nu}_k}^*} u_{\alpha_k \sigma_{\bar{\nu}_k}} \bigg) \bigg( \prod_{j \in \bar{\nu}} x_{\sigma_j} \bigg) [Z]_{\beta},
    \end{equation}
    so that
    \begin{equation}
            \bar{g}(\boldsymbol{x},\boldsymbol{z}) = \sum_{m=0}^N (-1)^{N-m} \sum_{\substack{\nu \in \mathcal{P}_N \\ |\nu| = m}} \sum_{\beta \in Q^{(N)}_{N-m}} [Z]_{\beta} \sum_{\alpha \in \pi \left(\beta\right)} \sum_{\sigma \in S_N} \mathrm{sgn}(\sigma) \bigg( \prod_{i \in \nu} \delta_{i,\sigma_i} \bigg)
            \bigg( \prod_{k=1}^{N-m} {u_{\alpha_k \bar{\nu}_k}^*} u_{\alpha_k \sigma_{\bar{\nu}_k}} \bigg) \bigg( \prod_{j \in \bar{\nu}} x_{\sigma_j} \bigg).
    \end{equation}
    The factor $\prod_{i \in \nu} \delta_{i,\sigma_i}$ eliminates some of the $\sigma$'s in the summation over all permutations of $S_N$, depending on $\nu$. The index $\sigma_j$ will actually span all the permutations $\theta$ of $\bar{\nu}$. It also happens that $\mathrm{sgn}(\sigma)$ which appears in the summation over $\sigma$ will be equal to $\mathrm{sgn}(\theta)$, taking the convention that $\bar{\nu}$ is always ordered. Having this in mind, we get
    \begin{equation}
        \begin{aligned}
            \bar{g}(\boldsymbol{x},\boldsymbol{z}) & = \sum_{m=0}^N (-1)^{N-m} \sum_{\substack{\nu \in \mathcal{P}_N \\ |\nu| = m}} \sum_{\beta \in Q^{(N)}_{N-m}} [Z]_{\beta} \sum_{\alpha \in \pi \left(\beta\right)} \sum_{\theta \in \pi \left(\bar{\nu}\right)}
            \mathrm{sgn}(\theta) \bigg( \prod_{k=1}^{N-m} {u_{\alpha_k \bar{\nu}_k}^*} u_{\alpha_k \theta_k} \bigg) \bigg( \prod_{j=1}^{|\bar{\nu}|} x_{\theta_j} \bigg) \\
            & = \sum_{m=0}^N (-1)^{N-m} \sum_{\beta \in Q^{(N)}_{N-m}} \sum_{\substack{\bar{\nu} \in \mathcal{P}_N \\ |\bar{\nu}| = N-m}} [Z]_{\beta} [X]_{\bar{\nu}} \sum_{\theta \in \pi \left(\bar{\nu}\right)} \mathrm{sgn}(\theta) \sum_{\alpha \in \pi \left(\beta\right)} \bigg( \prod_{k=1}^{N-m} {u_{\alpha_k \bar{\nu}_k}^*} u_{\alpha_k \theta_k} \bigg)
        \end{aligned}
        \label{eq:intermediate3}
    \end{equation}   
    The last product in $\bar{g}(\boldsymbol{x},\boldsymbol{z})$ can be written as
    \begin{equation}
        \prod_{k=1}^{N-m } {u_{\alpha_k \bar{\nu}_k}^*} u_{\alpha_k \theta_k} = \prod_{k=1}^{N-m } {(U \left(\alpha,\bar{\nu}\right)_{k k})^*} U \left(\alpha,\theta\right)_{k k} = \bigg( \prod_{k=1}^{N-m } {(U \left(\alpha,\bar{\nu}\right)_{k k})^*} \bigg) \bigg( \prod_{j=1}^{N-m } U \left(\alpha,\theta\right)_{j j} \bigg),
    \end{equation}
   where we have used the notation $U(\alpha,\beta)$ to denote the submatrix of $U$ obtained by keeping the rows (columns) whose indices belong to the subset $\alpha$ ($\beta$). Hence, using the Leibniz formula for the determinant, 
    \begin{equation}
        \sum_{\theta \in \pi \left(\bar{\nu}\right)} \mathrm{sgn}(\theta) \prod_{k=1}^{N-m} {u_{\alpha_k \bar{\nu}_k}^*} u_{\alpha_k \theta_k} = \bigg( \prod_{k=1}^{N-m} {(U \left(\alpha,\bar{\nu}\right)_{k k})^*} \bigg) [U]_{\alpha,\bar{\nu}}.
        \label{eq:intermediate2}
    \end{equation}
    Since exchanging two rows of a matrix changes the sign of its determinant, we have
    \begin{equation}
        [U]_{\beta,\bar{\nu}} = \mathrm{sgn}(\alpha) [U]_{\alpha,\bar{\nu}}, \quad \alpha \in \pi \left(\beta\right),
    \end{equation}
    so that Eq.~\eqref{eq:intermediate2} can be rewritten as
    \begin{equation}
        \begin{aligned}
             \bigg( \prod_{k=1}^{N-m} {(U \left(\alpha,\bar{\nu}\right)_{k k})^*} \bigg) \mathrm{sgn}(\alpha) [U]_{\beta,\bar{\nu}}, \quad \alpha \in \pi \left(\beta\right).
        \end{aligned}
    \end{equation}
    Then, the summation over $\alpha$ can be simplified by using again the Leibniz formula, which gives
    \begin{equation}
            \sum_{\alpha \in \pi \left(\beta\right)} \sum_{\theta \in \pi \left(\bar{\nu}\right)} \mathrm{sgn}(\theta) \prod_{k=1}^{N-m} {u_{\alpha_k \bar{\nu}_k}^*} u_{\alpha_k \theta_k}
            = \sum_{\alpha \in \pi \left(\beta\right)} \bigg( \prod_{k=1}^{N-m} {(U \left(\alpha,\bar{\nu}\right)_{k k})^*} \bigg) \mathrm{sgn}(\alpha) [U]_{\beta,\bar{\nu}} {= ([U]_{\beta,\bar{\nu}})^* [U]_{\beta,\bar{\nu}} = \left|[U]_{\beta,\bar{\nu}}\right|^2}.
    \end{equation}
    Using this, Eq.~\eqref{eq:intermediate3} reduces to
    \begin{equation}
        \bar{g}(\boldsymbol{x},\boldsymbol{z}) = \sum_{m=0}^N (-1)^{N-m} \sum_{\beta \in Q^{(N)}_{N-m}} \sum_{\substack{\bar{\nu} \in \mathcal{P}_N \\ |\bar{\nu}| = N-m}} [Z]_{\beta} [X]_{\bar{\nu}} {\left|[U]_{\beta,\bar{\nu}}\right|^2},
    \end{equation}
    which is essentially equivalent to
    \begin{equation}
        \bar{g}(\boldsymbol{x},\boldsymbol{z}) = \sum_{m=0}^N (-1)^m \sum_{\alpha,\beta \in \mathcal{R}_m} [Z]_{\beta} {\left| [U]_{\beta,\alpha} \right|^2}  [X]_{\alpha}.
    \end{equation}
   This concludes the proof of Lemma~\ref{lem:GenFunc}.
\end{proof}

We are now in position to prove Theorem~\ref{theo:RecNmain}. Equation~\eqref{eq:lem:GenFunc} of Lemma~\ref{lem:GenFunc} can obviously be rewritten as
\begin{equation} \label{eq:GenFunc2}
    \sum_{m=0}^N (\!-\!1)^m \!\! \sum_{\alpha, \beta \in \mathcal{R}_m} \Big| [U]_{\beta,\alpha}\Big|^2 \, [X]_{\alpha} [Z]_{\beta} \, \, g(\boldsymbol{x},\boldsymbol{z}) = 1.
\end{equation}
Now, using the shifting property of the GFs, it can easily be shown that multiplying $g(\boldsymbol{x},\boldsymbol{z})$ by $x_s$ for $s\in\alpha$ corresponds to decreasing the index $i_s$ of $B_{\boldsymbol{n}}^{\left( \boldsymbol{i} \right)}$ by one unit, while multiplying it by $z_s$ for $s\in\beta$ corresponds to decreasing the index $n_s$ by one unit. In addition, we know that the GF of the product $\delta_{\boldsymbol{i},\boldsymbol{0}} \, \delta_{\boldsymbol{n},\boldsymbol{0}}$ of two multivariate Kronecker deltas is $1$. Therefore, moving back from the vectors of dual variables $\boldsymbol{x}$ and $\boldsymbol{z}$ to the vectors of occupation numbers $\boldsymbol{i}$ and $\boldsymbol{n}$,  Eq.~\eqref{eq:GenFunc2} is mapped into the following relation:
\begin{equation} \label{eq:RecNmain2}
        \sum_{m=0}^N (\!-\!1)^{m} \!\! \sum_{\substack{\alpha, \beta \in \mathcal{R}_m }} \Big| [U]_{\beta,\alpha}\Big|^2 \,B_{\boldsymbol{n}-\boldsymbol{1}_{\beta}}^{\left( \boldsymbol{i}-\boldsymbol{1}_{\alpha} \right)} = \delta_{\boldsymbol{i},\boldsymbol{0}} \, \delta_{\boldsymbol{n},\boldsymbol{0}},
    \end{equation}
where $\boldsymbol{1}_\alpha$ denotes the $N$-dimensional vector with ones at all positions $s \in \alpha$ and zeros elsewhere (and similarly for $\boldsymbol{1}_\beta$).
Note that $B_{\boldsymbol{n}}^{\left( \boldsymbol{i}\right)}$ can be  nonzero only if $|\boldsymbol{i}| = |\boldsymbol{n}|$ (with $|\boldsymbol{i}| \coloneqq \sum_{s} i_s$ and $|\boldsymbol{n}| \coloneqq \sum_{s} n_s$) since $\hat{U}$ preserves the particle number. Thus, the contributing terms in Eq.~\eqref{eq:RecNmain2} have $|\boldsymbol{1}_{\alpha}| = |\boldsymbol{1}_{\beta}|=m$ when $\alpha, \beta \in \mathcal{R}_m$. Furthermore, it is implicitly assumed that all terms in the above summations implying a negative entry in the vectors $\boldsymbol{i}-\boldsymbol{1}_{\alpha}$ or $\boldsymbol{n}-\boldsymbol{1}_{\beta}$ vanish (indeed, occupation numbers cannot be negative). Thus, for a given $\boldsymbol{i}$, we take any subset $\alpha$ in the summation over $\mathcal{R}_m$ only if $i_s > 0$, $\forall s \in \alpha$. Similarly, for a given $\boldsymbol{n}$, we keep a subset $\beta$ only if $n_r > 0$, $\forall r \in \beta$.
\par

While Eq.~\eqref{eq:RecNmain2} seems to involve only bosonic probabilities at first sight, one cannot help but notice that the quantities $| [U]_{\beta,\alpha}|^2$ can in fact be viewed as fermionic transition probabilities.
Indeed, using the latter's definition in Eq.~\eqref{eq:defFer},  Eq.~\eqref{eq:RecNmain2} can be rewritten in a much more symmetric form as
\begin{equation}
    \sum_{\substack{\boldsymbol{j},\boldsymbol{k}  \in \mathbb{N}^N }}  (\!-\!1)^{|\boldsymbol{j}|} \, F_{\boldsymbol{k}}^{\left(\boldsymbol{j} \right)} \, B_{\boldsymbol{n}-\boldsymbol{k}}^{\left( \boldsymbol{i}-\boldsymbol{j} \right)} = 0.
\end{equation}
In the above, we exploited the fact that the fermionic occupation numbers cannot exceed one, hence the sum over $\boldsymbol{j}$ is actually restricted to $\boldsymbol{j} \leq \boldsymbol{1}_{[N]}$ (i.e., $j_s \leq 1$, $\forall s \in [N]$), and similarly for $\boldsymbol{k}$.
Furthermore, the only nonvanishing terms are such that $|\boldsymbol{i}|= |\boldsymbol{n}|$ and $|\boldsymbol{j}|= |\boldsymbol{k}|$ since particle numbers are conserved in the linear interferometer.
This concludes the proof of Theorem~\ref{theo:RecNmain}.

\subsection{Proof of Theorem~\ref{theo:recNGen}}

Consider some $L \geq 2N$ and $\varepsilon \leq 1/||A||_2$ ($L$ and $\varepsilon$ always exist since $N$ is finite). From Lemma~29 in Ref.~\onlinecite{bosonsampling}, there exists a unitary matrix $V \in \mathbb{C}^{L \times L}$ that contains $\varepsilon A$ as a submatrix. Consider such a matrix $V$ constructed from $A$. Define the vectors $\boldsymbol{p}, \boldsymbol{q} \in \mathbb{N}_+^L$, as follows: if $\alpha$ ($\beta$) is the vector corresponding to the column (row) indices of $V$ that correspond to $\varepsilon A$, fill the elements with indices belonging to $\alpha$ ($\beta$) in vector $\boldsymbol{p}$ ($\boldsymbol{q}$) with the elements of $\boldsymbol{i}$ ($\boldsymbol{n}$), and put $0$'s in the rest of the elements. Since  $|\boldsymbol{i}| = |\boldsymbol{n}|$, we clearly have  $|\boldsymbol{p}| = |\boldsymbol{q}|$. Therefore, $V$ satisfies Eq.~\eqref{eq:RecNmain3}, i.e., we have
\begin{equation} \label{eq:RecNapp4}
    \sum_{\substack{\boldsymbol{r}, \boldsymbol{s} \in \mathbb{N}_+^L \\ |\boldsymbol{r}| = |\boldsymbol{s}|}}^{\boldsymbol{r} \leq \boldsymbol{p}, \boldsymbol{s} \leq \boldsymbol{q}} (-1)^{|\boldsymbol{r}|} \, \frac{| \det (V_{\boldsymbol{q}-\boldsymbol{s},\boldsymbol{p}-\boldsymbol{r}}) |^2 \, | \per (V_{\boldsymbol{s},\boldsymbol{r}}) |^2 } {(\boldsymbol{q}-\boldsymbol{s})! \, (\boldsymbol{p}-\boldsymbol{r})! \, \boldsymbol{s}! \, \boldsymbol{r}!} = 0.
\end{equation}
Now, define the vector $\boldsymbol{j} \in \mathbb{N}_+^N$ to be the subvector of $\boldsymbol{r}$ corresponding to indices in $\alpha$. Similarly, define the vector $\boldsymbol{k} \in \mathbb{N}_+^N$ to be the subvector of $\boldsymbol{s}$ corresponding to indices in $\beta$. It is clear that $\boldsymbol{r} \leq \boldsymbol{p}$ is equivalent to saying that $\boldsymbol{j} \leq \boldsymbol{i}$, while $\boldsymbol{s} \leq \boldsymbol{q}$ is equivalent to $\boldsymbol{k} \leq \boldsymbol{n}$.
It is also clear that $V_{\boldsymbol{q}-\boldsymbol{s},\boldsymbol{p}-\boldsymbol{r}} = \varepsilon A_{\boldsymbol{n}-\boldsymbol{k},\boldsymbol{i}-\boldsymbol{j}}$ and $V_{\boldsymbol{s},\boldsymbol{r}} = \varepsilon A_{\boldsymbol{k},\boldsymbol{j}}$.
From this, $\det (V_{\boldsymbol{q}-\boldsymbol{s},\boldsymbol{p}-\boldsymbol{r}}) = \varepsilon^{|\boldsymbol{i}-\boldsymbol{j}|} \det (A_{\boldsymbol{n}-\boldsymbol{k},\boldsymbol{i}-\boldsymbol{j}})$, while $\per (V_{\boldsymbol{s},\boldsymbol{r}}) = \varepsilon^{|\boldsymbol{j}|} \per (A_{\boldsymbol{k},\boldsymbol{j}})$. Since $\boldsymbol{i}-\boldsymbol{j} \geq 0$, $|\boldsymbol{i}-\boldsymbol{j}| + |\boldsymbol{j}| = |\boldsymbol{i}|$.
We can therefore rewrite Eq.~\eqref{eq:RecNapp4} as
\begin{equation}
    \sum_{\substack{\boldsymbol{j}, \boldsymbol{k} \in \mathbb{N}_+^N \\ |\boldsymbol{j}| = |\boldsymbol{k}|}}^{\boldsymbol{j} \leq \boldsymbol{i}, \boldsymbol{k} \leq \boldsymbol{n}} (-1)^{|\boldsymbol{j}|} \, \frac{\varepsilon^{2 |\boldsymbol{i}|} | \det (A_{\boldsymbol{n}-\boldsymbol{k},\boldsymbol{i}-\boldsymbol{j}}) |^2 \, | \per (A_{\boldsymbol{k},\boldsymbol{j}}) |^2 } {(\boldsymbol{n}-\boldsymbol{k})! \, (\boldsymbol{i}-\boldsymbol{j})! \, \boldsymbol{k}! \, \boldsymbol{j}!} = 0.
\end{equation}
Dividing both sides of the identity by $\varepsilon^{2 |\boldsymbol{i}|}$ ends the proof.

%%%%%%%%%%%%%%%%%%%%%%%%
%%%%%%%%%%%%%%%%%%%%%%%%
%%%%%% CONCLUSION %%%%%%
%%%%%%%%%%%%%%%%%%%%%%%%
%%%%%%%%%%%%%%%%%%%%%%%%

\section{Conclusions and discussion}

By analytically expressing the generating function of bosonic probabilities, we have found a fundamental relation expressing a duality between the bosonic and fermionic particle statistics in an arbitrary linear interferometer (our Theorem~\ref{theo:RecNmain}). This relation takes the form of a linear equation combining bosonic and fermionic transition probabilities, which was \textit{a priori} implausible since quantum interferences inherently involve linear combinations of probability amplitudes (not of their squares). Restricting to one particle  per mode at most, this relation constrains the sum (difference) of bosonic and fermionic probabilities for an even (odd) particle number. For a single particle, the difference simply vanishes (reflecting the fact that particle statistics plays no role), whereas for two particles, the sum is equal to twice the classical probability (reflecting a boson--fermion complementarity). This result  generalizes to an arbitrary number of particles and happens to stem from a heretofore unknown mathematical identity between the squared moduli of determinants and permanents of arbitrary complex matrices
(our Theorem~\ref{theo:recNGen}). It would be interesting to explore the consequence of Theorems~\ref{theo:RecNmain} and~\ref{theo:recNGen} on the complexity of simulating linear optical circuits. The boson sampling paradigm  derives from the computational hardness of computing permanents, hence the connection with (easily computable) determinants as implied by Theorems~\ref{theo:RecNmain} and~\ref{theo:recNGen} may have intriguing ramifications.
\par

Finally, we remark that our result bears some similarities with the so-called \emph{quantum} MacMahon theorem~\cite{Q_MacMahon1,Q_MacMahon2,Q_MacMahon3}. The latter expresses a relation between some suitable quantum generalizations of the determinant and permanent, which can be interpreted as a sort of boson-fermion correspondence. Nevertheless, the quantum MacMahon theorem seems quite different from our Theorem~\ref{theo:RecNmain}, which fundamentally connects the statistics of bosons and fermions in an arbitrary interferometer.
This uncovered complementarity between bosons and fermions suggests a possible explanation in terms of a supersymmetry algebra, which is worth further investigation.

Let us conclude with the question of whether the boson-fermion complementarity remains valid in a more general context, in particular one that better describes a realistic situation. When it comes to the experimental manipulation of bosonic systems, in particular photonic ones, particles going through interferometers such as the ones described in the present work unavoidably exhibit slight discrepancies in their internal degrees of freedom, such as polarization or time. As a consequence, the particles are effectively not fully indistinguishable, but rather partially distinguishable. In view of the importance of interferometric experiments in recent years, partial distinguishability is worth studying. A natural question is then whether the boson-fermion complementarity holds for partially distinguishable particles. It appears that a statement analogous to that of Theorem~\ref{theo:RecNmain} holds in that case, as hinted at by preliminary results~\cite{Marco}.

\begin{acknowledgments}
We are grateful to Leonardo Novo for insightful discussions.
M.G.J. acknowledges support by the Fonds de la Recherche Scientifique – FNRS (Belgium).
N.J.C. acknowledges support by the Fonds de la Recherche Scientifique – FNRS (Belgium) under project CHEQS within the Excellence of Science (EOS) program.
\end{acknowledgments}

\section*{Data Availability Statement}

Data sharing is not applicable to this article as no new data were created or analyzed in this study.

\bigskip

% Bibliography
\bibliography{bosfer}

\end{document}